\def\ps@pprintTitle{%
	\let\@oddhead\@empty
	\let\@evenhead\@empty
	\let\@oddfoot\@empty
	\let\@evenfoot\@oddfoot
}
\newtheorem{theorem}{Theorem}[section]
\newtheorem{proposition}{Proposition}
\newtheorem{lemma}{Lemma}
\newtheorem*{proposition*}{Proposition}
\newtheorem{corollary}{Corollary}
\newtheorem{definition}{Definition}
\newtheorem*{remark}{Remark}
\theoremstyle{definition}
\newtheorem{example}{Example}[section]
\newcommand{\hide}[1]{\ifthenelse{\boolean{includeHidden}}{{\tiny\textbf{HIDDEN:~}#1}}{}}
\def\newblock{\ }%
\newproof{pf}{Proof}
\newcommand{\q}{\mathbf{q}}
\newcommand{\rbf}{\mathbf{r}}
\newcommand{\Q}{\mathcal Q}
\DeclareMathOperator{\R}{\mathbb{R}}
\DeclareMathOperator{\Z}{\mathbb{Z}}
\DeclareMathOperator{\N}{\mathbb{N}}
\DeclareMathOperator*{\argmax}{arg\,max}
\DeclareMathOperator{\supp}{supp}
\def\tmk@labeldef#1,#2\@nil{%
	\def\tmk@label{#1}%
	\def\tmk@def{#2}%
}
\journal{Economic Letters}
\begin{document}
	\singlespacing
	\title{On the Expressiveness of Assignment Messages}
	\author{Maximilian Fichtl, Technical University of Munich}
	\address{Decision Sciences \& Systems (DSS), Department of Informatics (I18), Technische Universität München, Boltzmannstr. 3, 85748 Garching, Germany}
	\ead{max.fichtl@tum.de}
	\begin{abstract}
	In this note we prove that the class of valuation functions representable via integral assignment messages is a proper subset of strong substitutes valuations. Thus, there are strong substitutes valuations not expressible via assignment messages.
	\end{abstract}
	\maketitle
	\section{Introduction}
	Strong substitutes valuations are an important class of valuation functions for indivisible markets, guaranteeing existence of a Walrasian equilibrium. They were introduced by \cite{milgrom2009_ss} as a multi-unit generalization of the gross substitutes condition for single-unit markets \citep{kelso1982_gs}. Many different equivalent definitions of gross- and strong substitutes have been discovered, see for example \cite{leme2017_gs}, \cite{fujishige2003_gs} and \cite{murota2016_dcanalysis}. Intuitively, a buyer with strong substitutes valuation treats different types of goods as one-to-one substitutes, and in particular, there are no complementarities between goods.
	
	In sealed-bid auction settings where bidders may be assumed to have strong substitutes preferences, it is of major practical importance that bidders can efficiently report their preferences to the seller. Explicitly reporting values for every possible bundle is not feasible in practice, since the number of bundles grows exponentially with the number of available goods. Thus, there is the need for a bidding language, allowing bidders to express their strong substitutes preferences in a more compact and intuitive way, while not further restricting the class of expressible valuations. \cite{milgrom2009_am} introduces \emph{integer assignment messages}, in the following only called \emph{assignment messages}, and proves that every valuation function expressible via an assignment message fulfills the strong substitutes condition. While this bidding language is quite intuitive, the question if bidders can express  \emph{arbitrary} strong substitutes valuations with assignment messages remained open. In this note we give a negative answer by proving that there are strong substitutes valuations not expressible via assignment messages. Our proof follows the lines of \cite{ostrovsky2015_endowed}, who showed that a related bidding language, called \emph{endowed assignments}, for single unit markets cannot express arbitrary gross substitutes valuations.
	
	As has recently been shown by \cite{baldwin2021_ss} the Strong Substitutes Product-Mix Auction \citep{klemperer2008_pma, klemperer2010_pma} is capable of expressing arbitrary strong substitutes preferences. Thus, our result implies that the Strong Substitutes Product-Mix Auction remains the only known bidding language that allows bidders to express all such preferences.
	\section{Economic Setting}
	We consider a market with $n \geq 2$ types goods $i\in \{1,\dots,n\}$. A \emph{bundle} of goods is a vector $\mathbf{q} \in \Z^n$, where $q_i$ is the number of units of good $i$ contained in $\mathbf{q}$. A negative value of $q_i$ expresses a willingness to sell goods of type $i$. Bidders' preferences are given by \emph{valuation functions} $v: \Q \rightarrow \R$, where $\Q \subset \Z^n$ is a finite set of feasible bundles with $0 \in \Q$, and $v(\mathbf{q})$ denotes the bidder's value for receiving bundle $\mathbf{q}$. A price vector is a vector $\mathbf{p} \in \R^n$, $p_i$ denoting the cost per unit of good $i$. Given a price vector $\mathbf{p}$, bidders seek to maximize their quasi-linear utility by choosing a bundle from their \emph{demand set}
	\[
	D(\mathbf{p}) = \argmax_{\mathbf{q} \in \Q} v(\mathbf{q})-\langle \mathbf{p}, \mathbf{q} \rangle.
	\]
	The utility of receiving such a bundle is called the \emph{indirect utility} and is denoted by
	\[
	u(\mathbf{p}) = \max_{\mathbf{q} \in \Q} v(\mathbf{q})-\langle \mathbf{p}, \mathbf{q} \rangle.
	\]
	We are interested in markets where bidders' valuations satisfy the strong substitutes condition.
	\begin{definition}[Strong Substitutes \citep{milgrom2009_ss}]
		A valuation function $v: \Q \rightarrow \R$, where $Q \subseteq \Z^n_{\geq 0}$ is \emph{strong substitutes}, if its binary representation $v: \tilde Q \rightarrow \R$, where $\tilde{\mathcal Q} \subseteq \{0,1\}^{\tilde n}$, in which every unit of every good is interpreted as a separate good, satisfies the \emph{binary substitutes} property: for all price vectors $\tilde{\mathbf p}, \tilde{ \mathbf p}' \in \R^{\tilde n}$ with $\tilde{ \mathbf p} \leq \tilde{ \mathbf p}'$ and any $\tilde{ \mathbf x}\in D(\tilde{ \mathbf p})$, there is a bundles $\tilde{ \mathbf x}' \in D(\tilde{ \mathbf p}')$ with $\tilde{ x}'_i \geq \tilde {x}_i$ for all $i$ with $\tilde p'_i = \tilde p_i$.
		
		If $\mathcal Q = \{0,1\}^n$, we call $v$ \emph{gross substitutes}.
	\end{definition}
	Probably the most important feature of strong substitutes valuations is that they ensure the existence of a Walrasian equilibrium \citep{milgrom2009_ss}.
	Assignment messages are a bidding language simplifying the task of expressing a bidder's valuation function $v: \Q \rightarrow \R$.
	\section{Assignment Messages}
	An integer assignment message as introduced in \cite{milgrom2009_am} expresses a bidder's valuation via a linear program. It is determined by a set of $m \in \N$ variables $x_j$ for $j \in J = \{1,\dots,m\}$, where each variable is associated with one of the $n\geq 2$ types of goods $k_j \in \{1,\dots,n\}$, and with a value $v_j \in \R$. We assume that for each good $i$ there is at least one variable associated with it - if not, we can just introduce dummy variables with a value of $0$. We define $R_i = \{j \in J\,:\, k_j = i\}$ to be the set of all variables associated with good $i$. Moreover, the bidder provides a set $\mathcal{I} \subset \mathcal{P}(J)$ of inequalities. Each inequality $I \in \mathcal{I}$ is a subset of the variables $J$ and is associated with an integral upper bound $u(I) \geq 0$ and an integral lower bound $\ell(I) \leq 0$, describing the linear constraints $\ell(I) \leq \sum_{j \in I} x_j \leq u(I)$. The value $v(\mathbf{q})$ for a bundle $\mathbf{q} \in \Q$ is given by
	\begin{align}
		v(\mathbf{q}) = \max& \sum_{j=1}^m v_j x_j \tag{VAL} \label{eq:basic_am} \\
		\text{s.t.} &\, \ell(I) \leq \sum_{j \in I} x_j \leq u(I) \, \forall I \in \mathcal{I} \nonumber \\
		&\,\sum_{j \in R_i} x_j = q_i \, \forall i=1,\dots,n. \nonumber
	\end{align}
	Here, $\Q \subset \Z^n$ is the set of all $\q$ for which (\ref{eq:basic_am}) has a feasible solution, which clearly contains $0$ and is bounded and thus finite. The indirect utility $u(\mathbf{p}) = \max_{\mathbf{q}\in \Q} v(\mathbf{q})-\langle \mathbf{p},\mathbf{q}\rangle$ can be expressed via
	\begin{align}
		u(\mathbf{p}) = \max &\sum_{i=1}^n \sum_{j\in R_i} (v_j - p_i) x_j \tag{IU}  \label{eq:iu} \\
		\text{s.t.} &\, \ell(I) \leq \sum_{j \in I} x_j \leq u(I) \, \forall I \in \mathcal{I}. \nonumber 
	\end{align}
	The demand set $D(\mathbf{p})$ of maximizers of $v(\mathbf{q})-\langle \mathbf{p},\mathbf{q} \rangle$ is the set of all $\mathbf{q} \in \Q$ that can be written as $q_i = \sum_{j \in R_i} x_j$ where $\mathbf{x}$ is an integral solution to (\ref{eq:iu})\footnote{Strictly speaking, we would have to impose integrality constraints on the variables in (\ref{eq:basic_am}) and (\ref{eq:iu}) at this point. However, \cite{milgrom2009_am} shows that both problems always have integral optimal solutions if the constraints have the required structure from Definition \ref{def:am}.}.
	The set of inequalities $\mathcal{I}$ may not be chosen arbitrarily, but must possess a certain tree structure. The following two definitions are taken from \cite{milgrom2009_am}.
	\begin{definition}\label{def:tree}
		A nonempty subset $\mathcal{T} \subseteq \mathcal{P}(J)$ is called a \emph{tree}, if for any  $K,L \subseteq \mathcal{T}$ with $K \cap L \neq \emptyset$ there holds $K \subseteq L$ or $L \subseteq K$. For $K \in \mathcal{T}$, we call the inclusion-minimal set $L \in \mathcal{T}$ with $L \supsetneq K$ the \emph{predecessor} of $K$, if such $L$ exists. Conversely, we call each $K$, such that $L$ is the predecessor of $K$, a \emph{successor} of $L$. We write $s_{\mathcal{T}}(L) = \{K\,:\, L \text{ predecessor of } K \text{ in } \mathcal{T}\}$ for the set of successors of $L$ in $\mathcal{T}$.
	\end{definition}
	\begin{definition}\label{def:am}
		The variables $J$ and inequalities $\mathcal{I}$ define an \emph{assignment message}, if $\mathcal{I} = \mathcal{T}_0 \cup \dots \cup \mathcal{T}_n$ is the union of $n+1$ trees, such that
		\begin{itemize}
			\item for $i =1,\dots,n$, $\mathcal{T}_i$ only contains inequalities in variables associated with good $i$: $\mathcal{T}_i \subseteq \mathcal{P}(R_i)$. Furthermore, $R_i \in \mathcal{T}_i$ and $\{j\} \in \mathcal{T}_i$ for all $j \in R_i$.
			\item $J \in \mathcal{T}_0$ and $\{j\} \in \mathcal{T}_0$ for all $j \in J$. We also write $R_0 = J$.
		\end{itemize}
	Each tree $\mathcal{T}_i$ for $i=0,\dots,n$ contains a unique element $R_i$ without predecessor, which we call the \emph{root} of the tree. The only elements in $\mathcal{T}_i$ that are no predecessors of any other element are the singletons $\{j\}$, which we also call the \emph{terminal nodes}. In the following, we write $s_i(L) := s_{\mathcal{T}_i}(L)$ for the set of successors of $L$ in a specific tree.
	\end{definition}
	Note that since $n \geq 2$, the trees can always be chosen such that they intersect only in the terminal nodes: $\mathcal{T}_0 \cap \mathcal{T}_i = \{\{j\}\,:\, j \in R_i\}$ and $\mathcal{T}_i \cap \mathcal{T}_k = \emptyset$ for $i,k \geq 1$ with $i \neq k$: if $I \in \mathcal{T}_i \cap \mathcal{T}_k\neq \emptyset$ with $i< k$ and $I$ not a singleton, we necessarily have that $i = 0$. Moreover, since we assume that there is at least one variable associated with every good, $R_0 \supsetneq R_j$, so $I$ is neither the root nor a terminal node of $\mathcal{T}_0$. Thus, we can remove $I$ from $\mathcal{T}_0$ without violating Definition \ref{def:am}.
	
	\section{Strong Exchangeability}
	
	\cite{ostrovsky2015_endowed} show that there are gross substitutes valuations that are not expressible via \emph{endowed assignments}. They observe that all endowed assignment valuations satisfy a certain property, called \emph{strong exchangeability}, and provide a gross substitutes valuation that is not strongly exchangeable, which is then consequently not expressible via endowed assignments. For two vectors $\mathbf{p}$ and $\mathbf{q}$, denote by $\supp_+ \mathbf p - \mathbf q$ the set of indices $i$ with $p_i - q_i > 0$.
	\begin{definition}[Single-Unit Strong Exchangeability \citep{ostrovsky2015_endowed}]\label{def:single_se}
		A valuation $v: \{0,1\}^n \rightarrow \R$ satisfies \emph{strong exchangeability}, if for every price vector $\mathbf{p}$ and all bundles $\mathbf{q}, \mathbf{r} \in D(\mathbf{p})$ with
		a minimal number of items, i.e., $\sum_i q_i = \sum_i r_i = \min_{\mathbf{q}' \in D(\mathbf p)} \sum_i q'_i$, there is a bijection $\sigma : \supp_+ \mathbf{q}-\mathbf{r} \rightarrow \supp_+ \mathbf{r}-\mathbf{q}$, such that $\mathbf{q}-\mathbf e_i+\mathbf e_{\sigma(i)}$ and $\mathbf{r}-\mathbf e_{\sigma(i)}+\mathbf e_i$ are contained in $D(\mathbf p)$ for all $i \in \supp_+ \q-\mathbf{r}$. Here, $\mathbf e_i$ is the $i$-th standard unit vector.
	\end{definition}
	\begin{proposition}[\cite{ostrovsky2015_endowed}]\label{prop:gs_se}
		There are gross substitutes valuations not satisfying the strong exchangeability property.
	\end{proposition}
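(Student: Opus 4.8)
The plan is to make the proposition concrete by exhibiting a single gross substitutes valuation and a single price vector at which strong exchangeability is violated. I would work with $n=6$ and identify the goods with the six edges of the complete graph $K_4$ on vertex set $\{1,2,3,4\}$, taking $v:\{0,1\}^6\to\R$ to be the rank function of the graphic matroid $M(K_4)$, i.e.\ $v(S)$ is the number of edges in a spanning forest of the subgraph with edge set $S$. Matroid rank functions are a textbook example of gross substitutes valuations (indeed they are $M^\natural$-concave), so this $v$ is admissible; I would cite this rather than reprove it.

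The next step is to read off the demand set at the uniform price vector $\mathbf p=(\tfrac12,\dots,\tfrac12)$. For a set $S$ one has $v(S)-\langle\mathbf p,\mathbf 1_S\rangle=r(S)-\tfrac12|S|$, and since $r(S)\le\min(|S|,3)$ this is maximised, with value $\tfrac32$, exactly when $S$ is a basis of $M(K_4)$ — that is, when $\mathbf 1_S$ is the indicator vector of a spanning tree of $K_4$. Every spanning tree has three edges, so every bundle in $D(\mathbf p)$ attains the minimal cardinality $3$, and the minimality hypothesis of Definition \ref{def:single_se} holds automatically for any pair of demanded bundles.

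I would then take the two edge-disjoint spanning trees $B_1=\{12,23,34\}$ (the path $1{-}2{-}3{-}4$) and $B_2=\{13,14,24\}$, letting $\mathbf q,\mathbf r$ be their indicator vectors. Then $\supp_+(\mathbf q-\mathbf r)=B_1$ and $\supp_+(\mathbf r-\mathbf q)=B_2$, so a witnessing bijection would have to be a map $\sigma:B_1\to B_2$ such that for every $e\in B_1$ both $B_1-e+\sigma(e)$ and $B_2-\sigma(e)+e$ are spanning trees. A short connectivity analysis of these six swaps pins down the feasible images: the forward swaps give $\sigma(12)\in\{13,14\}$ and $\sigma(34)\in\{14,24\}$, while the reverse swaps give $\sigma^{-1}(13)\in\{23,34\}$ and $\sigma^{-1}(24)\in\{12,23\}$. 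Combining these, $\sigma(34)\neq 13$ forces $\sigma(23)=13$, and $\sigma(12)\neq 24$ forces $\sigma(23)=24$ — impossible. (Equivalently, one checks directly that none of the six bijections $B_1\to B_2$ works.) Hence $v$ is not strongly exchangeable, which proves the proposition.

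There is no deep obstacle once the example is in hand: the verification is a finite case check. The real content is in choosing the instance. Most small matroids do satisfy the simultaneous two-sided bijective exchange condition underlying strong exchangeability (the ``base-orderability'' phenomenon), so one needs an instance where it provably fails; $M(K_4)$ is a small graphic matroid that does the job, and identifying such an example is the crux. The remaining work is purely bookkeeping — confirming that $v$ is gross substitutes and that $D(\mathbf p)$ is exactly the set of spanning trees.
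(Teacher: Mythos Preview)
Your argument is correct: the rank function of $M(K_4)$ is gross substitutes, the demand set at $\mathbf p=(\tfrac12,\dots,\tfrac12)$ is exactly the set of spanning trees, and your case analysis showing that no bijection $\sigma:B_1\to B_2$ satisfies both exchange conditions is sound. The paper itself does not prove this proposition but merely cites \cite{ostrovsky2015_endowed}; your $M(K_4)$ construction with two edge-disjoint spanning trees is essentially the example given in that reference, so your approach coincides with the original source.
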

	Our proof follows the same lines: first, we provide a multi-unit extension of strong exchangeability, and then we show that all valuations induced by assignment messages satisfy this property.
	
	\begin{definition}[Multi-Unit Strong Exchangeability]\label{def:se}
		A valuation $v: \Q \rightarrow \R$ satisfies \emph{strong exchangeability}, if for every price vector $\mathbf{p}$ and all bundles $\mathbf{q}, \mathbf{r} \in D(\mathbf{p})$ with minimal number of items, there is a correspondence $\sigma \in \supp_+ \mathbf{q}-\mathbf{r} \times \supp_+ \mathbf{r}-\mathbf{q}$, such that
		\begin{enumerate}[label = \arabic*., ref = \arabic*]
			\item \label{def:ex_in_demand} For each $(i,j) \in \sigma$, $\mathbf{q}-\mathbf{e}_i+\mathbf{e}_j \in D(\mathbf{p})$ and $\mathbf{r}+\mathbf{e}_i-\mathbf{e}_j \in D(\mathbf{p})$
			\item \label{def:ex_existence} For each $i \in \supp_+ \mathbf{q}-\mathbf{r}$ and $j \in \supp_+ \mathbf{r}-\mathbf{q}$, we have $1 \leq |\{j'\,:\, (i,j') \in \sigma \}| \leq q_i-r_i$ and $1 \leq |\{i' \,:\, (i',j) \in \sigma \}| \leq r_j - q_j$.
		\end{enumerate}
	\end{definition}
	\begin{remark}
		Note that in a single-unit market, Property \ref{def:ex_existence} says that for every $i \in \supp_+ \mathbf{q}-\mathbf{r}$, there is exactly one $j \in \supp_+ \mathbf{r}-\mathbf{q}$ such that $(i,j) \in \sigma$ and vice-versa. In this case $\sigma$ can be interpreted as a bijection $\sigma: \supp_+ \mathbf{q}-\mathbf{r} \rightarrow \supp_+ \mathbf{r}-\mathbf{q}$, so for single-unit markets Definitions \ref{def:single_se} and \ref{def:se} are equivalent.
	\end{remark}
	
	In order to prove that every assignment message satisfies strong exchangeability, we show that computing the indirect utility of an assignment message valuation can be interpreted as a min-cost flow problem. Given the tree structure of assignment messages from Definitions \ref{def:tree} and \ref{def:am}, we can transform the indirect utility problem (\ref{eq:iu}) by variable substitution as follows: for each $I \in \mathcal{I}$ introduce a variable $y_I$ representing $y_{I} = \sum_{j \in I} x_j$. Note that since $\{j\} \in \mathcal{I}$ for all $j \in J$, there are variables $y_{\{j\}}$ corresponding to the variables $x_j$. If $I \in \mathcal{T}_i$ is not a singleton, $I$ is the disjoint union of all its successors $K \in \mathcal{T}_i$, so
	\[
	y_{I} = \sum_{j \in I} x_j = \sum_{\substack{K \in s_i(I)}} \sum_{j \in K} x_j = \sum_{K \in s_i(I)} y_K.
	\]
	Similarly, we have
	\[
	y_{R_0} = \sum_{j \in R_0} x_j = \sum_{i=1}^n \sum_{j \in R_i} x_j = \sum_{i=1}^n y_{R_i}.
	\]
	The constraints $\ell(I) \leq \sum_{i \in I} x_i \leq u(I)$ translate to $\ell(I) \leq y_I \leq u(I)$.
	Using these observations, it is not hard to see that Problem (\ref{eq:iu}) can equivalently be formulated as
	\begin{align}
		\min & \sum_{i=1}^n \sum_{j \in R_i}(p_i-v_j) y_{\{j\}} \tag{MCF} \label{eq:mcf} \\
		\text{s.t.} &\, y_I - \sum_{K \in s_0(I)} y_K = 0 \,\forall I \in \mathcal{T}_0\setminus\{\{j\}\,:\, j \in R_0\}  \label{constr:t0} \\
		&\, \sum_{K \in s_i(I)} y_K - y_I = 0 \,\forall I \in \mathcal{T}_i\setminus\{\{j\}\,:\, j \in R_i\} \, \forall i=1,\dots,n \label{constr:ti}\\
		&\, \sum_{i=1}^n y_{R_i} - y_{R_0} = 0 \label{constr:roots}\\
		&\, \ell(I) \leq y_I \leq u(I) \, \forall I \in \mathcal{I} \label{constr:ineq}
	\end{align}
where instead of maximizing the objective function of (\ref{eq:iu}), we minimize the negative objective function to be consistent with literature on min-cost flows. The following lemma is a simple consequence of the variable substitution explained above, so the proof is omitted.
\begin{lemma}
Let $v: \Q \rightarrow \R$ be an assignment message. Then $\mathbf{q} \in D(\mathbf{p})$ if and only if there is an integral solution to (\ref{eq:mcf}) with $q_i = y_{R_i}$ for all $i \geq 1$.
\end{lemma}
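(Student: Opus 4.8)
The plan is to prove the lemma by showing that the variable substitution $y_I = \sum_{j \in I} x_j$, $I \in \mathcal{I}$, is a value-preserving bijection between the feasible solutions of (\ref{eq:iu}) and those of (\ref{eq:mcf}). Since the demand set $D(\mathbf{p})$ has already been characterised as the set of bundles $\mathbf{q}$ with $q_i = \sum_{j \in R_i} x_j$ for $\mathbf{x}$ an integral optimal solution of (\ref{eq:iu}), the lemma then follows by transporting integrality and objective values across this bijection.

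First I would define the map $\Phi$ sending a feasible $\mathbf{x}$ of (\ref{eq:iu}) to $\mathbf{y} = (y_I)_{I \in \mathcal{I}}$ with $y_I := \sum_{j \in I} x_j$. That $\mathbf{y}$ is feasible for (\ref{eq:mcf}) is exactly the computation already carried out in the text preceding the lemma: every non-singleton $I \in \mathcal{T}_i$ is the disjoint union of its successors in $\mathcal{T}_i$, which yields the conservation constraints (\ref{constr:t0}) and (\ref{constr:ti}); the sets $R_1,\dots,R_n$ partition $R_0 = J$, which yields (\ref{constr:roots}); and the inequality constraints of (\ref{eq:iu}) become verbatim the box constraints (\ref{constr:ineq}) after substitution. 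Integrality of $\mathbf{y}$ is immediate since each $y_I$ is a finite sum of the integers $x_j$.

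Next I would exhibit the inverse $\Psi$, sending a feasible $\mathbf{y}$ of (\ref{eq:mcf}) to $\mathbf{x} = (x_j)_{j \in J}$ with $x_j := y_{\{j\}}$. The crucial point is that every feasible $\mathbf{y}$ still satisfies $\sum_{j \in I} x_j = y_I$ for all $I \in \mathcal{I}$: this follows by induction along each tree $\mathcal{T}_i$, from the terminal nodes $\{j\}$ up to the root $R_i$, using the conservation constraints (\ref{constr:ti}) (respectively (\ref{constr:t0})), and is then propagated across the roots via (\ref{constr:roots}). Once this identity is available, the box constraints (\ref{constr:ineq}) read exactly as the inequalities $\ell(I) \leq \sum_{j \in I} x_j \leq u(I)$ of (\ref{eq:iu}), so $\mathbf{x}$ is feasible, and integrality is again clear. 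The identity also gives $\Phi \circ \Psi = \mathrm{id}$, whereas $\Psi \circ \Phi = \mathrm{id}$ is trivial because $y_{\{j\}} = x_j$; hence $\Phi$ is a bijection. Moreover $\sum_{i=1}^n \sum_{j \in R_i}(v_j - p_i)\, x_j = -\sum_{i=1}^n \sum_{j \in R_i}(p_i - v_j)\, y_{\{j\}}$, so $\Phi$ maps maximal solutions of (\ref{eq:iu}) to minimal solutions of (\ref{eq:mcf}) and back.

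Finally, specialising the identity to $I = R_i$ gives $\sum_{j \in R_i} x_j = y_{R_i}$, so $\mathbf{q} \in D(\mathbf{p})$ iff there is an integral optimal solution $\mathbf{x}$ of (\ref{eq:iu}) with $q_i = \sum_{j \in R_i} x_j$, iff $\mathbf{y} = \Phi(\mathbf{x})$ is an integral optimal solution of (\ref{eq:mcf}) with $q_i = y_{R_i}$ for all $i \geq 1$. The only part of this that is not pure bookkeeping is the inductive identity $\sum_{j \in I} x_j = y_I$ used to build $\Psi$ — this is precisely where the tree structure of Definitions \ref{def:tree} and \ref{def:am} and the conservation constraints are needed — and I expect that (still routine) step to be the main obstacle; it is presumably why the authors consider the lemma a ``simple consequence of the variable substitution'' and omit the details.
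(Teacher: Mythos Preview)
Your proposal is correct and is exactly the ``variable substitution'' argument the paper has in mind; indeed the paper omits the proof entirely, stating only that the lemma is a simple consequence of the substitution $y_I=\sum_{j\in I}x_j$ explained just before it. Your write-up supplies precisely the details the authors skip, including the one nontrivial point you identify --- the inductive recovery of $y_I=\sum_{j\in I}y_{\{j\}}$ from the conservation constraints via the tree structure --- and the interpretation of ``integral solution'' as integral \emph{optimal} solution is the intended one (this is how the lemma is used in the proof of Theorem~\ref{thm:am_se}).
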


It is not hard to see that each variable $y_I$ for $I \in \mathcal{I}$ appears exactly twice in the set of equality constraints of (\ref{eq:mcf}), once with coefficient $1$, and once with coefficient $-1$: for example, if $I \in \mathcal{T}_i$, where $i \geq 1$, is not a singleton, $y_I$ clearly appears with negative sign in one of the equalities in (\ref{constr:ti}). If additionally $I \neq R_i$, $I$ is also the successor of some element, so $y_I$ also appears in exactly one equation from (\ref{constr:ti}) with coefficient $1$. On the other hand, if $I = R_i$, $y_I$ appears with positive coefficient in equation (\ref{constr:roots}). This property can be checked in a similar way for all other variables.
Thus, if we collect the variables $y_I$ in the vector $\mathbf{y} = (y_I)_{I \in \mathcal{I}}$ and write the equality constraints (\ref{constr:t0})-(\ref{constr:roots}) in matrix form as $A\mathbf{y} = 0$, $A$ is the incidence matrix of a directed graph, where $\mathcal{I}$ is the set of arcs, and each of the constraints from (\ref{constr:t0})-(\ref{constr:roots}) corresponds to a vertex in the graph. For any such vertex, $I$ is an ingoing arc, if $y_I$ appears with coefficient $1$, and an outgoing arc, if it appears with coefficient $-1$. Consequently, Problem (\ref{eq:mcf}) can be interpreted as a min-cost flow problem where $y_I$ denotes the flow along arc $I$. For details on min-cost flows, we refer to \cite{ahuja1993_networkflows}.
\begin{example}\label{ex:trees}
	Suppose a bidder submits an assignment message in four variables $J = R_0 = \{1,2,3,4\}$, where $R_1 = \{1,2,3\}$ and $R_2 = \{4\}$. The submitted inequalities induce the trees $\mathcal{T}_0 = \{R_0,\{2,3,4\},\allowbreak \{1\},\dots,\{4\}\}$, $\mathcal{T}_1 = \{R_1,\{1,2\},\{1\},\dots,\{3\}\}$ and $\mathcal{T}_2 = \{R_2\}$. The directed graph corresponding to the incidence matrix $A$ is shown in Figure 1.
\end{example}
\begin{figure}[t]
	\centering
	\includegraphics[height=0.2\textheight]{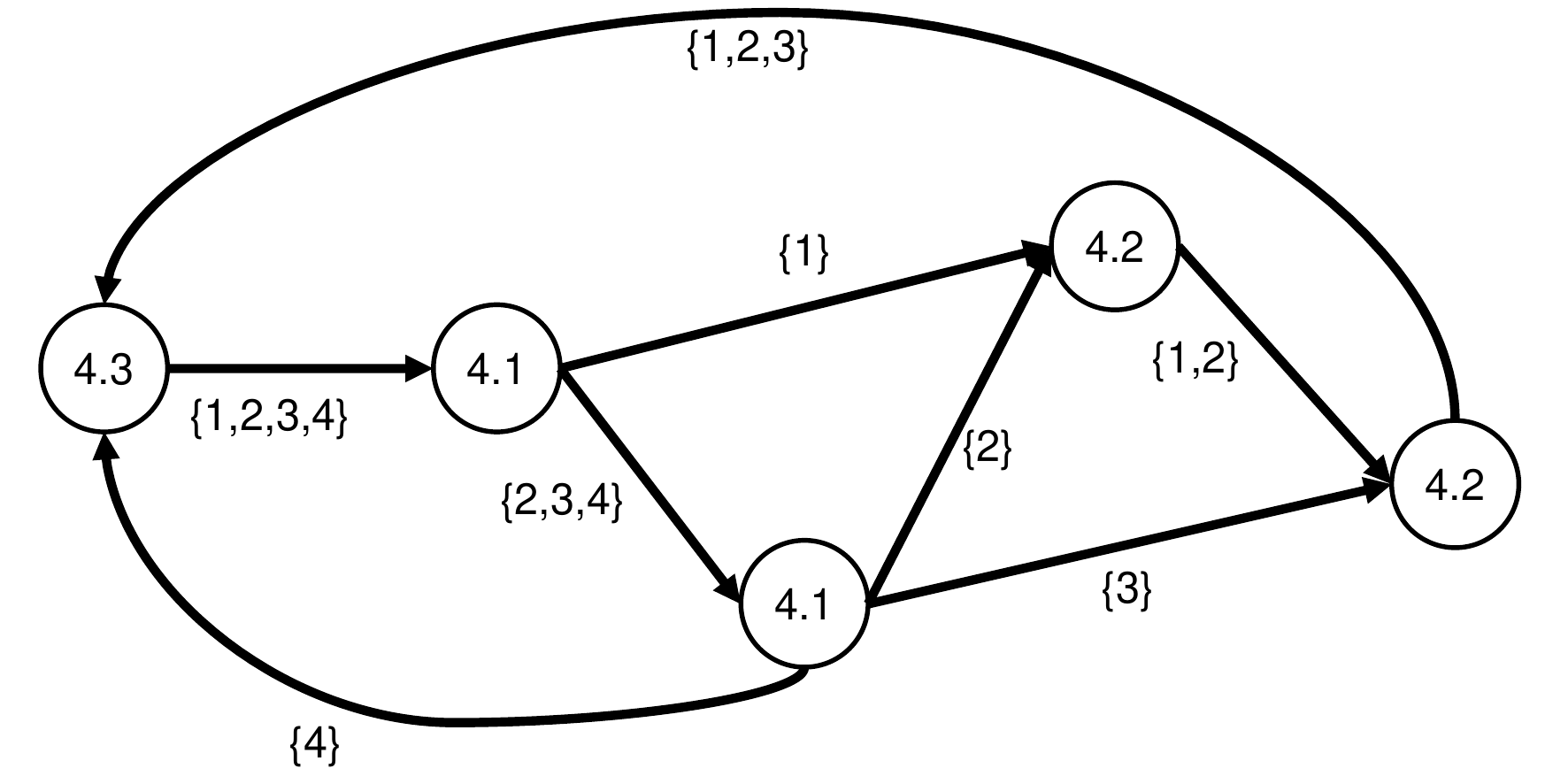}
	\label{fig:graph}
	\caption{Directed graph from Example \ref{ex:trees}. The labels on the vertices indicate the equality constraint in (\ref{eq:mcf}) they correspond to.}
\end{figure}
In order to prove our main Theorem \ref{thm:am_se}, we recall some properties of min-cost flows.
\begin{lemma}[Properties of Flows \citep{ahuja1993_networkflows}\footnote{\cite{ahuja1993_networkflows} generally consider only non-negative flows, as arbitrary flow problems can be easily transformed into non-negative ones by introducing backward-arcs. For the sake of brevity, we allow negative flows here. Note that the proofs given in their book for the mentioned flow properties do actually not require non-negativity.}]\label{lem:flow_props}
	Let $G=(V,A)$ be a directed graph with vertex set $V$ and arc set $A$. Let $\mathbf f:A \rightarrow \Z$ be a flow on $G$.
	\begin{enumerate}[label = \arabic*., ref = \arabic*]
		\item \label{lem:flow_decomp} If $\mathbf f$ is balanced at every vertex, i.e.,
		\[
		\sum_{a=(w,v) \in A} f_a - \sum_{a=(v,w) \in A} f_a = 0
		\]
		for all vertices $v \in V$, then $\mathbf f$ can be decomposed into finitely many cycles: there are subsets $C^1,\dots,C^m \subseteq A$ of arcs, such that each $C^k$ is an undirected cycle in $G$, and balanced flows $\mathbf c^k: C^k \rightarrow \{-1,1\}$, such that $\mathbf f = \sum_{k=1}^m \mathbf c^k$. Moreover, we have $f_a \geq 0 \Leftrightarrow c^k_a \geq 0$ for all $a \in A$ and all $k=1,\dots,m$. \footnote{This follows from the construction in the proof of Theorem 3.5 in \cite{ahuja1993_networkflows}.}
		\item \label{lem:no_neg_cycle} Suppose $\mathbf f$ is an optimal solution to the general min-cost flow problem
		\begin{align*}
			\min & \sum_{a \in A} w_af_a \\
			\text{s.t.} &\, \sum_{a=(w,v) \in A} f_a - \sum_{a=(v,w) \in A} f_a = s(v) \, \forall v \in V\\
			&\, \ell(a) \leq f_a \leq u(a) \, \forall a \in A
		\end{align*}
		for some given weights $w_a$, supplies $s(v)$ and bounds $\ell(a), u(a)$. Then $\mathbf f$ does not contain any negative cycles: for $C \subseteq A$ an undirected cycle and a balanced flow $\mathbf c: C \rightarrow \Z$ on $C$ such that $\mathbf f+\mathbf c$ is a feasible solution, we have that $\sum_{a \in A} w_ac_a \geq 0$. \footnote{This is Theorem 3.8 in \cite{ahuja1993_networkflows}. Since we allow negative flows, we do not need to introduce the residual graph of a flow problem.}
	\end{enumerate}
\end{lemma}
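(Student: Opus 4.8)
The plan is to handle the two items of Lemma~\ref{lem:flow_props} separately; both are classical network-flow facts, essentially Theorems~3.5 and~3.8 of \cite{ahuja1993_networkflows}, so what I would write is a short self-contained sketch together with a pointer to that reference for the remaining details.

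For Part~\ref{lem:flow_decomp} I would argue by induction on the total flow $\sum_{a \in A} |f_a|$. If this is $0$, the empty decomposition works. Otherwise I would pick an arc $a_0$ with $f_{a_0} \neq 0$ and grow an undirected walk that at every step leaves the current vertex along an incident arc carrying nonzero flow, traversed in the direction of that flow; starting the walk at the tail of $a_0$ when $f_{a_0} > 0$ and at its head otherwise, the balance hypothesis at the current vertex guarantees that, once the walk has entered a vertex in flow direction, a continuing arc is always available. Since $V$ is finite, the walk must revisit a vertex, and the portion of the walk between the two visits is a simple undirected cycle $C$ --- no arc can repeat, because every arc is traversed only in the direction of its flow. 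Setting $c_a = 1$ when $a$ is traversed in its own direction along $C$ and $c_a = -1$ otherwise yields a balanced flow $\mathbf{c} : C \to \{-1,1\}$ with each $c_a$ of the same sign as $f_a$; then $\mathbf{f} - \mathbf{c}$ is again balanced, is sign-compatible with $\mathbf{f}$ (a coordinate may drop to $0$ but never flips sign, since $\mathbf{f}$ is integral), and has strictly smaller total flow, so the induction hypothesis applies and adjoining $\mathbf{c}$ completes the decomposition; tracing signs through the recursion shows that every extracted cycle flow is sign-compatible with the original $\mathbf{f}$, which is the ``moreover'' claim. Alternatively I could just cite that Theorem~3.5 of \cite{ahuja1993_networkflows} decomposes a circulation into directed cycles and then split a cycle carrying $\delta$ units into $\delta$ copies of the unit cycle flow.

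For Part~\ref{lem:no_neg_cycle} the argument I have in mind is immediate from optimality: since $\mathbf{c}$ is balanced, $\mathbf{f} + \mathbf{c}$ satisfies the same supply constraints as $\mathbf{f}$ and is feasible by hypothesis, and its objective value equals $\sum_{a \in A} w_a f_a + \sum_{a \in A} w_a c_a$; optimality of $\mathbf{f}$ forces this to be at least $\sum_{a \in A} w_a f_a$, that is, $\sum_{a \in A} w_a c_a \geq 0$. This is the content of Theorem~3.8 of \cite{ahuja1993_networkflows} specialized to the residual-free setting.

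I do not expect a genuine obstacle here, since both statements are textbook results; the only point that needs a little care is the existence of the continuing arc in the walk of Part~\ref{lem:flow_decomp}, which is exactly where the balance hypothesis is used, together with the observation --- already flagged in the footnotes to the statement --- that none of these arguments relies on the flows being non-negative.
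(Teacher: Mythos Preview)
Your proposal is correct: the cycle-extraction induction for Part~\ref{lem:flow_decomp} and the one-line optimality comparison for Part~\ref{lem:no_neg_cycle} are exactly the standard arguments behind Theorems~3.5 and~3.8 of \cite{ahuja1993_networkflows}. The paper itself does not supply a proof of this lemma at all --- it merely states the two properties and cites those theorems in footnotes --- so your self-contained sketch is strictly more than what the paper provides, but fully in line with the intended approach.
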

\begin{theorem}\label{thm:am_se}
	Let $v$ be a valuation induced by an assignment message. Then $v$ satisfies the strong exchangeability property.
\end{theorem}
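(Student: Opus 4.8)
The plan is to exploit the min-cost-flow reformulation (\ref{eq:mcf}) and to build the required correspondence $\sigma$ from a cycle decomposition of the difference of two demanded flows. Fix a price vector $\mathbf p$ and two bundles $\mathbf q,\mathbf r\in D(\mathbf p)$ of minimal size; we may assume $\mathbf q\neq\mathbf r$, since otherwise $\sigma=\emptyset$ satisfies Definition~\ref{def:se} vacuously. Using the characterization of $D(\mathbf p)$ via integral optimal solutions of (\ref{eq:mcf}), choose integral optimal solutions $\mathbf y$ and $\mathbf y'$ of (\ref{eq:mcf}) with $y_{R_i}=q_i$ and $y'_{R_i}=r_i$ for all $i\geq 1$. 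Both satisfy the homogeneous equality system of (\ref{eq:mcf}), so their difference $\mathbf f:=\mathbf y-\mathbf y'$ is balanced at every vertex, and part~\ref{lem:flow_decomp} of Lemma~\ref{lem:flow_props} decomposes it as $\mathbf f=\sum_{k=1}^{m}\mathbf c^{k}$, where each $\mathbf c^{k}$ is a $\{-1,1\}$-valued flow on a simple undirected cycle $C^{k}$ and $f_a\geq 0\Leftrightarrow c^{k}_a\geq 0$ for every arc $a$ and every $k$.

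The first step is to check that each single cycle already produces demanded bundles. Fix $k$. The vectors $\mathbf y-\mathbf c^{k}$ and $\mathbf y'+\mathbf c^{k}$ satisfy the equalities of (\ref{eq:mcf}) because $\mathbf c^{k}$ is balanced, and they satisfy the box constraints (\ref{constr:ineq}): the sign condition keeps $c^{k}_a$ between $0$ and $f_a$, so $y_a-c^{k}_a$ and $y'_a+c^{k}_a$ lie on the segment between $y_a$ and $y'_a$ and hence within the corresponding bounds. Thus both are feasible. Applying part~\ref{lem:no_neg_cycle} of Lemma~\ref{lem:flow_props} to the optimal solution $\mathbf y'$ and the cycle $\mathbf c^{k}$ shows that $\mathbf c^{k}$ has nonnegative cost; applying it to $\mathbf y$ and the reversed cycle $-\mathbf c^{k}$ shows that $\mathbf c^{k}$ has nonpositive cost. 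Hence every $\mathbf c^{k}$ has cost exactly $0$, so $\mathbf y-\mathbf c^{k}$ and $\mathbf y'+\mathbf c^{k}$ are again integral optimal solutions of (\ref{eq:mcf}) and therefore correspond to bundles in $D(\mathbf p)$.

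The second step is to read off the exchange encoded by each cycle from the structure of the network. Since $\mathbf q$ and $\mathbf r$ have the same, minimal, number of items, $f_{R_0}=\sum_i q_i-\sum_i r_i=0$, and as no $c^{k}_{R_0}$ can oppose this sign, all $c^{k}_{R_0}$ vanish: no cycle uses the arc $R_0$. Every root arc $R_i$ with $i\geq 1$ is incident to the single vertex of constraint (\ref{constr:roots}), and a simple cycle meets that vertex in $0$ or $2$ arcs; discarding $R_0$, a cycle that meets it uses exactly two root arcs $R_{i_1},R_{i_2}$ with $i_1\neq i_2$, and flow conservation of $\mathbf c^{k}$ at that vertex forces $c^{k}_{R_{i_1}}=+1$, $c^{k}_{R_{i_2}}=-1$ (up to relabelling). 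Therefore each $\mathbf c^{k}$ is either zero on all root arcs, in which case $\mathbf y-\mathbf c^{k}$ reproduces $\mathbf q$, or else $\mathbf y-\mathbf c^{k}$ and $\mathbf y'+\mathbf c^{k}$ correspond exactly to $\mathbf q-\mathbf e_{i_1}+\mathbf e_{i_2}$ and $\mathbf r+\mathbf e_{i_1}-\mathbf e_{i_2}$, and the sign condition gives $q_{i_1}-r_{i_1}>0$ and $q_{i_2}-r_{i_2}<0$, that is $i_1\in\supp_+ \mathbf q-\mathbf r$ and $i_2\in\supp_+ \mathbf r-\mathbf q$.

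Finally, take $\sigma$ to be the set of all pairs $(i_1,i_2)$ arising in this way. Property~\ref{def:ex_in_demand} is exactly what the second step established. For Property~\ref{def:ex_existence}, fix $i\in\supp_+ \mathbf q-\mathbf r$: any cycle with $c^{k}_{R_i}\neq 0$ must have $c^{k}_{R_i}=+1$ and be of the exchange type with $i_1=i$ (the sign condition excludes $c^{k}_{R_i}=-1$), and since $\sum_k c^{k}_{R_i}=f_{R_i}=q_i-r_i$ there are exactly $q_i-r_i\geq 1$ such cycles, so the number of distinct partners $j$ with $(i,j)\in\sigma$ lies between $1$ and $q_i-r_i$; the bound for $j\in\supp_+ \mathbf r-\mathbf q$ is symmetric. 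The one substantive point --- and the step I expect to be the main obstacle --- is the combinatorial claim that a simple cycle in the network touches at most two of the root arcs $R_1,\dots,R_n$; this is precisely where the tree structure of Definitions~\ref{def:tree} and~\ref{def:am} enters, through the fact that all roots $R_i$ are glued together at the single vertex of constraint~(\ref{constr:roots}). Everything else is bookkeeping with the sign-consistent flow decomposition and the no-negative-cycle optimality criterion.
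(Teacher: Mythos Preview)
Your proof is correct and follows essentially the same approach as the paper's: form the difference of two optimal flows, take a sign-consistent cycle decomposition, use the no-negative-cycle criterion to show each cycle has zero cost, and read off the correspondence $\sigma$ from the root arcs each cycle touches. The step you flag as the ``main obstacle'' is not a gap: you already give the argument (all root arcs $R_0,R_1,\dots,R_n$ meet at the single vertex of constraint~(\ref{constr:roots}), a simple cycle uses at most two arcs there, and $c^k_{R_0}=0$ forces either none or exactly two of $R_1,\dots,R_n$ to be used with opposite signs), and this is precisely how the paper handles it.
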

\begin{proof}
Let $\mathbf{q}, \mathbf{r} \in D(\mathbf{p})$ be bundles containing a minimal number of goods and $\mathbf{y}^{\mathbf{q}}, \mathbf{y}^{\mathbf{r}}$ corresponding integral solutions to (\ref{eq:mcf}) with $q_i = y^{\mathbf{q}}_{R_i}$ and $r_i = y^{\mathbf{r}}_{R_i}$ for all $i$. We are going to construct a correspondence $\sigma$ satisfying the properties from Definition \ref{def:se}. Since $\mathbf{y}^{\mathbf{q}}$ and $\mathbf{y}^{\rbf}$ are balanced, i.e., $A \mathbf y^{\mathbf q} = A \mathbf y ^{\mathbf r} = 0$, so is $\mathbf{y}^{\mathbf{q}}-\mathbf{y}^{\mathbf{r}}$, and we can write $\mathbf{y}^{\mathbf{q}} - \mathbf{y}^{\mathbf{r}} = \mathbf{c}_1 + \dots + \mathbf{c}_m$ where each $\mathbf{c}_k$ is supported on a cycle $C^k$ by Lemma \ref{lem:flow_props}. We prove that the flows $\mathbf{c}^k$ have the following properties:
\begin{enumerate}[label = (\roman*), ref = (\roman*)]
	\item \label{proof:opt_sln} $\mathbf{y}^{\mathbf{r}}+\mathbf{c}^k$ and $\mathbf{y}^{\mathbf{q}}-\mathbf{c}^k$ are optimal solutions to (\ref{eq:mcf}) for every $k$. 
	\item $|\{i\geq 1\,:\, c^k_{R_i} = 1 \}| = |\{i\geq 1\,:\, c^k_{R_i} = -1\}| \leq 1$ for every $k$. \label{proof:sup_goods}
\end{enumerate}
To see \ref{proof:opt_sln}, we first note that $\mathbf{y}^{\mathbf{r}}+\mathbf{c}^k$ is feasible for Problem (\ref{eq:mcf}): as $\mathbf y^{\mathbf r}$ and $\mathbf c^k$ are balanced, so is $\mathbf y^{\mathbf r}+\mathbf c^k$. Concerning the inequality constraints (\ref{constr:ineq}), if $c^k_I > 0$, then it follows from Property \ref{lem:flow_decomp} in Lemma \ref{lem:flow_props} that $u(I) \geq y^{\mathbf{q}}_I \geq y^{\mathbf{r}}_I + c^k_I > \ell(I)$. With a similar argument we can treat the case $c^k_I < 0$. Consequently, by Property \ref{lem:no_neg_cycle}, we have 
\[
\sum_{i=1}^n\sum_{j \in R_i} (p_i-v_j)c^k_{\{j\}} \geq 0.
\]
With the same argument applied to $\mathbf{y}^{\mathbf{q}}-\mathbf{c}^k$, we get
\[
\sum_{i=1}^n\sum_{j \in R_i} (p_i-v_j)c^k_{\{j\}} \leq 0,
\]
so $\sum_{i=1}^n\sum_{j \in R_i} (p_i-v_j)c^k_{\{j\}} = 0$. Hence, the objective values in (\ref{eq:mcf}) of the flows $\mathbf{y}^{\mathbf{q}}$, $\mathbf{y}^{\mathbf{r}}$, $\mathbf{y}^{\mathbf{r}}+\mathbf{c}^k$ and $\mathbf{y}^{\mathbf{q}}-\mathbf{c}^k$ are all equal and thus optimal.

Let us now prove \ref{proof:sup_goods}. To that goal, note that, since $\mathbf{q}$ and $\mathbf{r}$ are bundles with a minimum number of elements, we have $\mathbf{y}^{\mathbf{q}}_{R_0} = \mathbf{y}^{\mathbf{r}}_{R_0}$, so by Property \ref{lem:flow_decomp} of Lemma \ref{lem:flow_props} we have $c^k_{R_0} = 0$ for all $k$. Consider the flow of $\mathbf c^k$ through the vertex corresponding to constraint (\ref{constr:roots}), i.e., representing the equality
\[
\sum_{i=1}^n c^k_{R_i} - c^k_{R_0} = 0.
\]
As $\mathbf c^k$ is supported on a cycle, at most two of the appearing variables $c^k_{R_i}$ can be nonzero. Thus, since $c^k_{R_0} = 0$, either no or exactly two of the $c_{R_i}$ are nonzero, and since their sum equals $0$, one must be $1$, and the other must be $-1$.

Let us now define $\sigma \in \supp_+ \mathbf{q}-\mathbf{r} \times \supp_+ \mathbf{r}-\mathbf{q}$ by
\[
\sigma = \left\{ (i,j)\,:\, \exists k:\, c^k_{R_i} = 1 \wedge c^k_{R_j} = -1 \right\}.
\] 
We check that $\sigma$ has the required properties from Definition \ref{def:se}: let $(i,j) \in \sigma$. Then there is some $\mathbf{c}^k$, such that $c^k_{R_i} = 1$, $c^k_{R_j} = -1$ and $c^k_{R_l} = 0$ for $l \not\in \{i,j\}$. From observation \ref{proof:opt_sln} above we have that $\mathbf{y}^{\mathbf{r}} + \mathbf{c}^k$ is an optimal solution to problem (\ref{eq:mcf}), and the demanded bundle corresponding to that solution is $\mathbf{r}+\mathbf{e}_i-\mathbf{e}_j$. Similarly, the requested bundle corresponding to $\mathbf y^{\mathbf q} - \mathbf c^k$ is $\mathbf{y}^{\mathbf{q}} - \mathbf{e}_i+\mathbf{e}_j$, so Property
\ref{def:ex_in_demand} from Definition \ref{def:se} is satisfied.

For Property \ref{def:ex_existence} from Definition \ref{def:se}, let $i \in \supp_+ \q-\rbf$. We need to show that $1 \leq |\{j'\,:\, (i,j') \in \sigma\}| \leq q_i-r_i$. Since $q_i > r_i$ and $\mathbf{y}^{\mathbf{q}}-\mathbf{y}^{\mathbf{r}} = \mathbf{c}^1 + \dots + \mathbf{c}^m$, there must be some $k$ with $c^k_{R_i} = 1$. Consequently, there is some $j$ with $c^k_{R_j} = -1$, which proves the lower bound. Moreover, by Property \ref{lem:flow_decomp} of Lemma \ref{lem:flow_props}, there is no flow $\mathbf{c}^k$ with $c^k_{R_i} = -1$. Thus, there are at most $q_i-r_i$ flows with $c^k_{R_i} = 1$, proving the upper bound.
\end{proof}
Theorem \ref{thm:am_se} together with Proposition \ref{prop:gs_se} directly imply that assignment messages do not cover all strong substitutes valuations.
\begin{corollary}
	There are strong substitutes valuations that are not representable via an Assignment Messages.
\end{corollary}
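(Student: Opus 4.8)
The plan is to combine Theorem~\ref{thm:am_se} with Proposition~\ref{prop:gs_se} by a short contrapositive argument, with essentially no computation involved. First I would observe that every gross substitutes valuation is in particular a strong substitutes valuation: when $\mathcal Q = \{0,1\}^n$ the binary representation $\tilde{\mathcal Q}$ coincides with $\mathcal Q$, and on $\{0,1\}^n$ the binary substitutes property is by definition the gross substitutes condition, so the inclusion is immediate and nothing needs to be verified.

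Next I would invoke Proposition~\ref{prop:gs_se} to fix a concrete gross substitutes valuation $v:\{0,1\}^n\to\R$ that violates single-unit strong exchangeability (Definition~\ref{def:single_se}). By the Remark following Definition~\ref{def:se}, on a single-unit market the single-unit and multi-unit notions of strong exchangeability coincide; hence $v$ also violates the multi-unit strong exchangeability property of Definition~\ref{def:se}.

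Finally I would argue by contradiction. Suppose $v$ were induced by some assignment message; since $v$ has domain $\{0,1\}^n$, the induced valuation agrees with $v$ on all of $\{0,1\}^n$. Then Theorem~\ref{thm:am_se} would force this valuation, and hence $v$, to satisfy multi-unit strong exchangeability, contradicting the previous paragraph. Therefore $v$ is a strong substitutes valuation that is not representable via an assignment message, which is exactly the assertion of the corollary.

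I do not expect any real obstacle here: the corollary is a direct consequence of the two cited results, and the only point that deserves to be spelled out explicitly is the passage between the single- and multi-unit versions of strong exchangeability, which is already settled by the Remark after Definition~\ref{def:se}.
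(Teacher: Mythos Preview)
Your argument is correct and mirrors the paper's own proof: both combine Theorem~\ref{thm:am_se} with Proposition~\ref{prop:gs_se}, use that gross substitutes valuations are strong substitutes, and invoke the Remark after Definition~\ref{def:se} to pass between the single- and multi-unit notions of strong exchangeability. The only cosmetic difference is that you phrase the final step as a contradiction, whereas the paper states the contrapositive directly.
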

\begin{proof}
	Each assignment message satisfies the strong exchangeability property from Definition \ref{def:se}. However, by Proposition \ref{prop:gs_se} by \cite{ostrovsky2015_endowed}, there exist gross substitutes valuations that are not strongly exchangeable. Since gross substitutes valuations are a subset of strong substitutes valuations, and Definitions \ref{def:single_se} and \ref{def:se} are equivalent for single-unit markets, the result follows.
\end{proof}

\section*{Acknowledgements}
I would like to thank Edwin Lock from the University of Oxford for his very valuable comments and suggestions.


\end{document}